\documentclass[conference]{IEEEtran}
\IEEEoverridecommandlockouts
\usepackage{cite}
\usepackage{amsmath,amssymb,amsfonts}
\usepackage{algorithmic}
\usepackage{algorithm}
\usepackage{graphicx}
\usepackage{textcomp}
\usepackage{xcolor}
\usepackage{subfigure}
\usepackage{flushend}
\usepackage{mathtools}
\usepackage{tikz}
\usetikzlibrary{positioning}
\usepackage[normalem]{ulem}
\usetikzlibrary{plotmarks}
\usepackage{algorithm}\usepackage{soul} 

\floatname{algorithm}{Algorithm}
\usepackage{algorithmic}
\usepackage{physics}
\usepackage[colorlinks=true,citecolor=blue,linkcolor=blue,urlcolor=blue]{hyperref}

\usepackage{amsthm}
\newtheorem{proposition}{Proposition}

\def\BibTeX{{\rm B\kern-.05em{\sc i\kern-.025em b}\kern-.08em
    T\kern-.1667em\lower.7ex\hbox{E}\kern-.125emX}}
\begin{document}

\title{Optimization of Quantum Systems Emulation via a Variant of the Bandwidth Minimization Problem}


\author{

\IEEEauthorrefmark{1} M. Yassine Naghmouchi,
\IEEEauthorrefmark{1} Joseph Vovrosh,
\IEEEauthorrefmark{1} Wesley da Silva Coelho,
\IEEEauthorrefmark{1} Alexandre Dauphin
\\

\\
\IEEEauthorblockA{\IEEEauthorrefmark{1}PASQAL SAS,\textit{7 rue Léonard de Vinci, 91300 Massy, France}}
}


\newcommand{\jeremie}[1]{\textcolor{red}{#1}} 

\maketitle
\thispagestyle{plain}
\pagestyle{plain}

\begin{abstract}
This paper introduces weighted-BMP, a variant of the Bandwidth Minimization Problem (BMP), with a significant application in optimizing quantum emulation. Weighted-BMP optimizes particles ordering to reduce the emulation costs, by designing a particle interaction matrix where strong interactions are placed as close as possible to the diagonal. We formulate the problem using a Mixed Integer Linear Program (MILP) and solve it to optimality with a state-of-the-art solver. To strengthen our MILP model, we introduce symmetry-breaking inequalities and establish a lower bound. Through extensive numerical analysis, we examine the impacts of these enhancements on the solver's performance. The introduced reinforcements result in an average CPU time reduction of 25.61\%. Additionally, we conduct quantum emulations of realistic instances. Our numerical tests show that the weighted-BMP approach outperforms the Reverse Cuthill-McKee (RCM) algorithm—an efficient heuristic used for site ordering tasks in quantum emulation— achieving an average memory storage reduction of 24.48\%. From an application standpoint, this study is the first to apply an exact optimization method, weighted-BMP, that considers interactions for site ordering in quantum emulation pre-processing, and shows its crucial role in cost reduction. From an algorithmic perspective, it contributes by introducing important reinforcements and lays the groundwork for future research on further enhancements, particularly on strengthening the weak linear relaxation of the MILP. 
\end{abstract}

\begin{IEEEkeywords}
Bandwidth minimization, tensor networks, quantum computing emulation.
\end{IEEEkeywords}
\newcommand{\yn}[1]{\textcolor{violet}{#1}}

\section{Introduction}
\label{section:introduction}
Given an undirected graph $G = (V, E)$ with a set of vertices $V$ and a set of edges $E$, the Bandwidth Minimization Problem (BMP)~\cite{chinn1982bandwidth} is to find an ordering of the vertices that minimizes the maximum difference between the positions of adjacent vertices. Formally, let $\pi: V \rightarrow \{1, 2, ..., |V|\}$ be a bijection, where $\pi(u)$ represents the position of vertex $u$ in the ordering. The bandwidth, denoted as $b$, is defined as $ b = \underset{(u, v) \in E}{\max} \ |\pi(u) - \pi(v)|$.  The objective is to find a permutation $\pi$ that minimizes $b$. In addition to its graph-based formulation, the BMP can be represented using matrices. Specifically, the objective is to reorder the rows and columns of a square matrix in a way that minimizes the largest distance between the positions of non-zero elements. This ordering minimizes the distance of the nonzero components from the main diagonal of the matrix. Achieving a minimal bandwidth in $G$ is equivalent to minimizing the bandwidth in its adjacency matrix, where each vertex is assigned to the same row/column.

In the field emulation of quantum many-body systems ~\cite{thouless_quantum_1972}-involving numerous interacting sites- the BMP plays an essential role, particularly in improving the computational cost of handling Tensor Networks (TN)~\cite{orus2014practical}. Tensors, multi-dimensional arrays that generalize scalars, vectors, and matrices, are fundamental in modeling quantum many-body systems and their inherent interactions. TNs provide a mathematical framework that use these tensors, connecting them to represent quantum states and operators. Matrix Product State (MPS), a specific case of TN, are used for representing quantum systems in a one-dimensional representation. MPS decompose quantum systems, initially represented by high-ranked tensors, into sequences of interconnected tensors, where each tensor corresponds to the quantum state of an individual site within the system. In quantum emulation, the conversion of a high-rank tensor into an MPS is a common pre-processing approach. This transformation is important as it computationally facilitates emulations that would otherwise be impractical, due to the exponential growth of the Hilbert space with the increasing number of sites.


Recent studies~\cite{cataldi2021hilbert, hikihara2023automatic, legeza2003optimizing, legeza2015advanced, brabec2021massively, ali2021ordering, mate2023compressing} highlight the benefits of efficiently mapping high-ranked tensors to MPS for quantum emulations, in terms of memory storage and computational time. An efficient strategy for mapping usually involves using \textit{the interaction matrix} of the quantum many-body system, which details the strength and nature of interactions between particles in a quantum many-body system. The goal is to reorder its rows and columns in a way that minimizes the bandwidth, and then use this ordering for the MPS emulations. This ordering process is equivalent to address the BMP. Nonetheless, these studies mainly utilize the BMP in its classical definition, neglecting the magnitude of the interactions which might impact the efficiency of the ordering. Moreover, they often depend on heuristic methods such as the Reverse Cuthill-McKee (RCM) algorithm~\cite{cuthill1969reducing}, which, while reducing bandwidth, yield sub-optimal orderings that may not induce the best computational efficiency in emulations. Given the significance of incorporating site interactions in the ordering task, and the benefit of an optimal ordering through exact methods rather than heuristics, this paper introduces an exact resolution method for a BMP variant, named \textit{weighted-BMP}, that includes interactions (weights). This variant, aims at finding an ordering of the rows and columns of the interaction matrix, such that elements corresponding to stronger interactions are as close as possible to the main diagonal.

This paper presents two main contributions: an algorithmic/optimization contribution and an application contribution. From an algorithmic perspective, we model the weighted-BMP using a Mixed Integer Linear Program (MILP), which we solve to optimality. To address the computational complexity of this problem, we incorporate reinforcements such as symmetry-breaking inequalities and establish a theoretical lower bound. Furthermore, we conduct extensive numerical analysis in order to evaluate the effect of these reinforcements on the solver's performance. On the application side, we perform quantum emulations for significant physical systems, in particular, amorphous solids~\cite{zallen2008physics}, in the context of a neutral atom processor~\cite{henriet2020quantum}. We investigate the enhancement in quantum emulation efficiency through the optimal weighted-BMP ordering. Throughout this numerical study, we benchmark our weighted-BMP solutions against those obtained using the RCM method. From an application standpoint, the primary goal of this work is to underscore the role of weighted-BMP in quantum emulation. On the other hand, from an algorithmic viewpoint, our objective is to propose effective preliminary reinforcements, via symmetry-breaking inequalities and a lower bound.  

This paper is organized as follows: Section \ref{section:relatedWork} reviews related literature. In Section \ref{section:tensornetworks}, we provide background on tensor networks. Section \ref{section:formulation} introduces the problem and its MILP formulation. Section \ref{section:strenghtening} discusses formulation reinforcements. Our numerical results are presented in Section \ref{section:results}.  Section \ref{section:conclusion} concludes the paper.

\section{Related work}
\label{section:relatedWork}

The BMP is known to be NP-hard, even for trees~\cite{garey1978complexity}. In response, researchers utilize heuristics to address near-optimal solutions that balance computational efficiency with solution quality, as explored in studies\cite{gibbs1976algorithm, dueck1995heuristic, esposito1998new, del2001heuristic}. Applications of genetic algorithms and ant-based systems in this context can be found in the work~\cite{czibula2013soft}, and a Tabu search method is employed in~\cite{marti2001reducing}. A particularly significant heuristic, which has seen extensive use in quantum emulation studies, is the RCM algorithm. First introduced by Cuthill and McKee~\cite{cuthill1969reducing}, the RCM algorithm starts with an arbitrary vertex and uses a breadth-first search to assign levels to vertices. It then sorts vertices based on these levels and further organizes them according to the number of neighbors in earlier levels. This process, repeated across all vertices, clusters them in a manner that minimizes the maximum distance between adjacent vertices in the graph, thereby reducing bandwidth, although it does not always guarantee the optimal solution. 

Another research direction in addressing the BMP focuses on developing theoretical findings, such as lower bounds, and exact solutions. This research direction is primarily motivated by the challenge posed by BMP's weak lower bounds and linear relaxations, which significantly complicates the task for exact computational methods. Initiated by Chvátal's density lower bound~\cite{chvatal1970remark}, a variety of lower bounds have been proposed for estimating the minimum bandwidth. Among the notable contributions, two methods stand out for solving this problem. The first, introduced by Del Corso~\cite{del1999finding}, is particularly effective for small to medium-sized instances. The second, developed by Caprara and Salazar~\cite{caprara2005laying}, extends the first by implementing tighter lower bounds, thus making it capable of handling larger instances. Furthermore, a branch-and-bound algorithm for the BMP is proposed in~\cite{marti2008branch}, and several formulations of the problem are studied in~\cite{coudert2016note}. 

Beyond the quantum emulation application discussed in the introduction, the BMP has significant relevance in a wide range of other applications. Particularly in the context of sparse matrices, reducing bandwidth enhances cache utilization and minimizes memory access patterns. This leads to more efficient matrix operations, such as matrix multiplication. The BMP is also known role in solving linear systems of equations like \( Ax = b \)~\cite{caprara2005laying}, where pre-processing the matrix \( A \) to minimize its bandwidth substantially reduces computational efforts. In particular, Gaussian elimination on matrices with bandwidth \( b \) can be executed in \( O(nb^2) \), offering a considerable speed advantage over the standard \( O(n^3) \) algorithm when \( b \ll n \)~\cite{lim2007fast}. The BMP also finds practical applications in diverse fields such as electromagnetic industry, chemical kinetics, numerical geophysics~\cite{esposito1999sparse}, information retrieval in hypertext systems~\cite{berry1996sparse}, and extends to graph embedding in the context of physical implementation~\cite{zhou2020quantum}.


\section{Tensor Networks for the emulation of Quantum Many-Body System}
\label{section:tensornetworks}

A \textit{tensor} is a multi-dimensional array that generalizes the concepts of scalars, vectors, and matrices. The rank of a tensor, indicating the number of indices or dimensions it possesses, determines its type: rank-0 for scalars, rank-1 for vectors, and rank-2 for matrices. Higher-rank tensors expand upon these basic structures. Tensors can be represented graphically through \textit{Penrose diagrams}, here, tensors are represented as nodes, with the incident edges are indices, as seen in Figure~\ref{fig:tensor_network_diagrams}. The rank is reflected by the number of edges connected to each node. Index contraction, a key operation in tensor calculus, involves summing over shared indices between tensors, analogous to the dot product in vector algebra. For instance, the matrix product \(C_{\alpha\gamma} = \sum_{\beta=1}^{D} A_{\alpha\beta} B_{\beta\gamma}\) contracts the index \(\beta\), merging dimensions of matrices \(A\) and \(B\). Graphically, a contraction of two tensors is represented as an edge connecting nodes. 

More generally, Tensor Networks (TNs) represent complex tensor operations through graphical representations with multiple nodes corresponding to tensors, and multiple edges, each representing contracted indices. This visualization simplifies complex computations and clarifies multidimensional relationships. Figure~\ref{fig:tensor_network_diagrams} illustrates TNs of various ranks. TNs include MPS, Projected Entangled Pair States
(PEPS), Tree Tensor Networks (TTN), and Multiscale Entanglement Renormalization Ansatz (MERA) which can be
defined in any dimension.

\begin{figure}[h]
    \centering
    \includegraphics[scale=0.35]{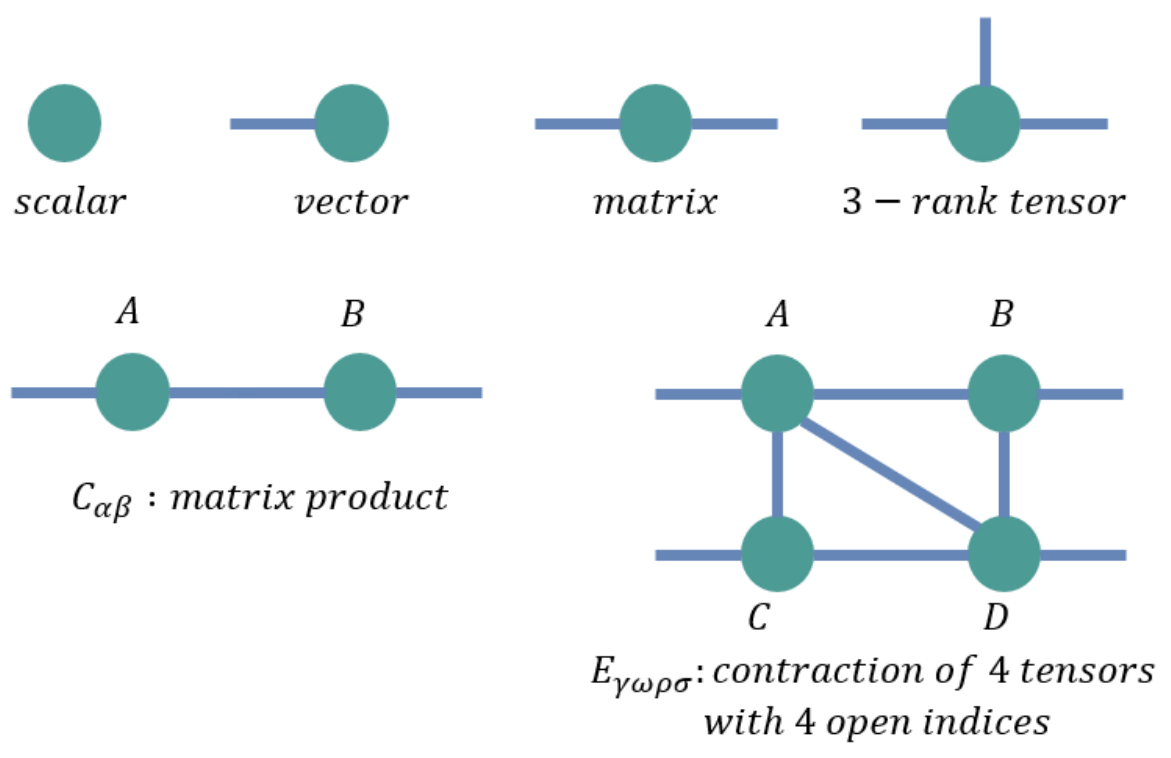}
    \caption{Graphical representation of tensor networks.}
    \label{fig:tensor_network_diagrams}
\end{figure}

In quantum computing, a \textit{qubit} is the fundamental unit of quantum information, analogous to a bit in classical computing. A qubit can exist in a superposition of two basis states, often denoted as $\ket{0}$ and $\ket{1}$. A \textit{quantum state} of a qubit is a complex linear combination of these basis states, described by a vector in a two-dimensional Hilbert space. The \textit{composite state} $\ket{\Psi}$ of a many-body quantum system represents the collective state of all sites in the system. The \textit{wavefunction} is a mathematical function that describes the quantum state, containing all possible states and their corresponding probabilities.

Emulating a quantum many-body system~\cite{thouless_quantum_1972, coleman_introduction_2015} often involves computing the systems final state after applying unitary operators~\cite{henriet2020quantum}. This process models the systems evolution under different conditions. A significant challenge in emulating these systems is the exponential growth of the state space with the number of particles. Specifically, the wavefunction of an \(N\)-qubit system is represented by a tensor of rank \(2^N\). The computational complexity of handling such a tensor grows exponentially with \(N\), making exact emulations intractable for large systems.

MPS offers a solution for the above challenge by transforming the high-rank tensor into a 1-dimensional chain TN, as seen in the Penrose diagram shown in  Figure~\ref{fig:tensor_network_decomposition}. The \textit{bond dimension} of an MPS, denoting the number of singular values retained in the tensor decomposition, is a critical factor in determining the accuracy and computational cost of the representation. Higher bond dimensions generally allow for more accurate representations but increase memory usage and computational time. The \textit{order of contractions} in MPS, crucial for emulation's computational efficiency, refers to the sequence in which tensor contractions are performed. This sequence significantly impacts the computational resources required and can affect the bond dimension. In quantum computing, this often involves mapping qubits to the nodes of an MPS, dictating how tensors associated with different qubits interact.  
An optimal contraction order is key for efficient computation, particularly in emulating entangled states in quantum systems, and can help manage the bond dimension to balance accuracy and computational cost.


\begin{figure}[h]
    \centering
    \includegraphics[scale=0.22]{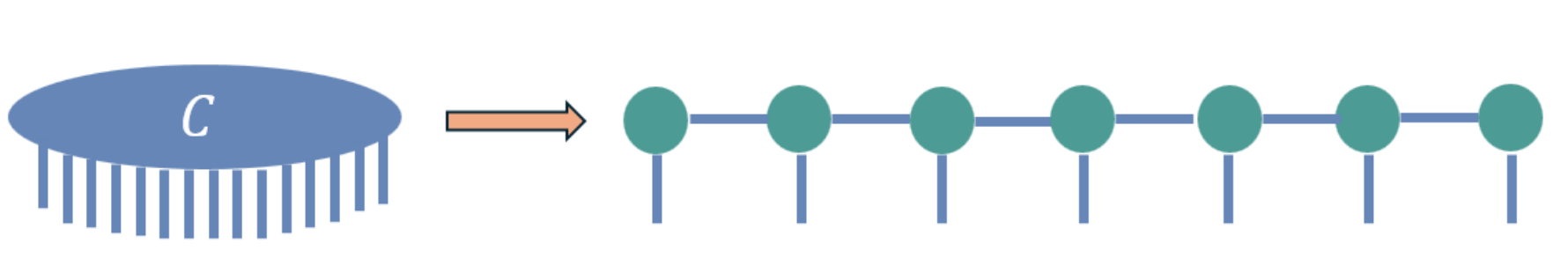}
    \caption{Decomposition of a tensor $C$ into an MPS.}
    \label{fig:tensor_network_decomposition}
\end{figure}

The interaction matrix \( U \) represents the strength and nature of interactions between different sites in a quantum many-body system. Inspired by a Rydberg atom array~\cite{henriet2020quantum}, in this work the components of \( U \) are defined by \( u_{ij} = \frac{C_6}{d_{i,j}^6} \), where \( C_6 \) is a device-dependent constant, and \( d_{i,j} \) denotes the physical distance between sites \( i, j\). It has been demonstrated that minimizing the bandwidth of the interaction matrix -- solving a BMP -- can lead to a reduction in the bond dimension of MPS~\cite{cataldi2021hilbert, hikihara2023automatic, legeza2003optimizing, legeza2015advanced, brabec2021massively, ali2021ordering, mate2023compressing}. This is a significant insight because reducing the bond dimension can greatly decrease the computational complexity and resources needed for simulating quantum many-body systems. Typically, researchers have used heuristics like RCM to minimize the bandwidth of the interaction matrix. However, these heuristics do not, generally, take into account interaction strengths, and do not guarantee optimality. Thus, they might not always yield the most efficient reduction in bond dimension. 

Our study aims to advance beyond existing heuristics by exploring the exact resolution of the weighted-BMP problem. This variant of the BMP focuses on an ordering for the rows/columns of the interaction matrix that places the largest interactions closer to the main diagonal, thereby taking into account the interactions between sites. While this method may necessitate increased computational effort, it has the promise of improving the efficiency of tensor-to-MPS mapping through further reduction of the bond dimension. In the next section, we study the weighted-BMP problem. The reader is encouraged to consult~\cite{orus2014practical} for comprehensive insights into TNs, and~\cite{rieffel2000introduction} for a foundational background in quantum computing. 

\section{Problem statement and formulation}
\label{section:formulation}
In this section, we introduce the weighted-BMP problem, and present its MILP formulation.

\subsection{Problem statement}

Let $G = (V, E)$ be an undirected complete graph representing the quantum many-body system. Here, \(V\) denotes the set of qubits of size $n$, and \(E\) represents the interactions between them. Each edge \(uv \in E\) is associated with a distance \(d_{uv}\).The interaction matrix \(U\) is then defined such that \(u_{uv} = \frac{1}{d_{uv}^6}\) for all \(uv \in E\), intentionally excluding the constant \(C_6\) for model simplification. Let \(\pi: V \rightarrow \{1, 2, \ldots, n\}\) be a bijective function that assigns each vertex \(u\) to a unique ordered position \(\pi(u)\). The bandwidth \(b\) of the system is expressed as \(b = \max_{(u, v) \in E} u_{uv} |\pi(u) - \pi(v)|\). Weighted-BMP is to find the bijection \(\pi\) that minimizes the bandwidth \(b\). Such optimal ordering ensure that the vertices with the largest interactions are positioned as close as possible to each other. In terms of matrix representation, this ordering positionate the largest interactions as near to the matrix's diagonal as possible.

\subsection{MILP formulation}
We define the following decision variables:

\begin{align*}
& b: \text{a positive continuous variable representing the bandwidth,} \\
& x_v^i = \begin{cases}
      1 & \text{if the position of } v \text{ is } i, \\
      0 & \text{otherwise,}
   \end{cases} \quad \forall v \in V, \, \forall i \in \{1, \ldots, n\}.
\end{align*}
The weighted-BMP is equivalent to the following MILP
\begin{alignat}{4}
\nonumber \min \: & b && \\[6pt]
\text{s.t. } & \sum_{v \in V} x_v^i = 1  & \quad & \forall i = 1, \ldots, n, && \quad \label{constraint:ordering1} \\
& \sum_{i=1}^n x_v^i = 1  & \quad & \forall v \in V, && \quad  \label{constraint:ordering2} \\
& \sum_{i=1}^n i(x_u^i - x_v^i) \leq b \cdot d_{uv}^6  & \quad & \forall (u, v) \in E: u \neq v, && \quad \label{constraint:bandwidth} \\[6pt]
& b \geq 0, && \\
& x_v^i \in \{0, 1\}  & \quad & \forall v \in V, i = 1, \ldots, n.
\end{alignat}
Constraints~\eqref{constraint:ordering1} and~\eqref{constraint:ordering2} define an ordering of \(V\) by ensuring that a unique vertex is assigned to each position, and exactly one position is assigned to each vertex, respectively. Constraint~\eqref{constraint:bandwidth} calculates the bandwidth for each distinct pair of vertices \((u, v) \in V^2\). Here, the term $\sum_{i=1}^n i(x_u^i - x_v^i)$  is designed to compute the difference in positions between vertices
$u$ and $v$, indicating how far apart they are positioned within the ordering. The objective is to minimize the maximum bandwidth. This formulation is solved to optimality. 

While the MILP formulation of weighted-BMP is compact, involving \(n^2 + 1\) variables and \(n(n+2)\) constraints, it encounters limitations due to its weak linear relaxation and the presence of inherent symmetries that affect the efficiency of the resolution algorithm within the solver. To address these challenges, the subsequent section will detail the introduction of symmetry-breaking inequalities, and the establishment of a theoretical lower bound, aimed at fortifying the formulation. These enhancements are designed to improve the algorithmic performance of the solver by mitigating the issues associated with the initial model weaknesses.

\section{Formulation strengthening}
\label{section:strenghtening}
From constraints~\eqref{constraint:bandwidth}, it is easy to see that
\begin{equation}
\label{lower-bound}
\max_{u,v \in V^2: u \neq v} \frac{1}{{d_{uv}}^6} \leq b. 
\end{equation}
Inequality~\eqref{lower-bound} gives a theoretical lower bound that we add to the MILP formulation, in order to strengthen its linear relaxation, thereby improving the convergence of the algorithm. In addition to that, we add symmetry-breaking inequalities~\eqref{eq:symmetry_breaking} to reduce the number of nodes in the resolution tree. These inequalities are based on the following result.

\begin{proposition}
\label{symmetry-breaking1}
Let $u$ be an arbitrary vertex in $V$, then there exists an optimal solution such that 
\begin{equation}
\label{eq:symmetry_breaking}
\sum_{i=1}^{n} i x_u^i \leq \left\lceil \frac{n}{2} \right\rceil.
\end{equation}
\end{proposition}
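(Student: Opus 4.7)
The plan is to exploit the left--right symmetry of the ordering: reversing any feasible assignment via $\pi \mapsto n+1-\pi$ yields another feasible assignment with the same bandwidth, because constraint~\eqref{constraint:bandwidth} depends only on position differences weighted by $d_{uv}^6$. I will show that starting from any optimal solution, either it or its reflection positions $u$ within the first half of the ordering.

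Concretely, I would fix any optimal solution $x$ with value $b^*$, define a reflected solution by $\bar{x}_v^i := x_v^{\,n+1-i}$ for all $v \in V$ and $i \in \{1,\dots,n\}$, and check that $\bar{x}$ is feasible with the same objective. The ordering constraints~\eqref{constraint:ordering1}--\eqref{constraint:ordering2} are preserved because $i \mapsto n+1-i$ is a bijection of $\{1,\dots,n\}$. For the bandwidth constraints, a short computation using $\sum_{i=1}^n x_u^{i} = \sum_{i=1}^n x_v^{i} = 1$ yields
\[
\sum_{i=1}^n i\,(\bar{x}_u^i - \bar{x}_v^i) \;=\; \sum_{i=1}^n (n+1-i)(x_u^{i} - x_v^{i}) \;=\; -\sum_{i=1}^n i\,(x_u^{i} - x_v^{i}).
\]
Since~\eqref{constraint:bandwidth} is imposed for every ordered pair with $u \neq v$, this sign flip is absorbed by swapping the roles of $u$ and $v$: the $(u,v)$-constraint at $\bar{x}$ is exactly the $(v,u)$-constraint at $x$. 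Hence $\bar{x}$ is feasible and achieves the same objective value $b^*$.

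Finally, setting $p := \sum_{i=1}^n i\, x_u^{i}$ and $\bar{p} := \sum_{i=1}^n i\, \bar{x}_u^i$, a direct computation gives $\bar{p} = n+1-p$. Since $p$ and $\bar{p}$ are positive integers with sum $n+1$, we have $\min(p,\bar{p}) \leq \lfloor (n+1)/2 \rfloor = \lceil n/2 \rceil$, and picking whichever of $x$ or $\bar{x}$ achieves this minimum produces an optimal solution satisfying~\eqref{eq:symmetry_breaking}.

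I do not anticipate a substantive obstacle: the core argument is reversal symmetry combined with a pigeonhole between $p$ and $n+1-p$. The only step requiring care is the transformation of the bandwidth constraint under reversal, which relies on~\eqref{constraint:bandwidth} being stated for all ordered pairs $(u,v)$ rather than unordered ones.
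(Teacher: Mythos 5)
Your proposal is correct and uses essentially the same argument as the paper: the reflection $\pi \mapsto n+1-\pi$ preserves the bandwidth, so between an optimal solution and its reflection, one places $u$ at a position $p$ with $\min(p,\,n+1-p)\leq \lceil n/2\rceil$. The paper phrases this as a proof by contradiction while you argue directly, and you additionally verify feasibility of the reflected incidence vector against constraints~\eqref{constraint:ordering1}--\eqref{constraint:bandwidth} explicitly, but the underlying idea is identical.
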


\begin{proof}
Denote by \(x(\pi)\) the incidence vector of any weighted-BMP solution \(\pi\), where \(\pi\) is a bijective function defining an ordering: \(\pi: V \rightarrow \{1, 2, \ldots, |V|\}\), that assigns to each \(u \in V\) a position \(\pi(u)\) in the ordering.

We prove the proposition by contradiction. Assume that for any optimal solution \(\pi\), it holds that \(\sum_{i=1}^{n} ix_u^i(\pi) \geq \left\lceil \frac{n}{2} \right\rceil + 1.\) Now, define the solution \(\pi'\) such that \(\pi'(u) = n + 1 - \pi(u)\) for all \(u \in V\), which creates an inverse ordering of \(\pi\).

Evaluating the bandwidths of solutions \(\pi\) and \(\pi'\), denoted \(b(\pi)\) and \(b(\pi')\) respectively, we observe that due to the symmetry of the inverse ordering, for every pair of vertices \(u, v \in V\), the difference in positions \(|\pi'(u) - \pi'(v)|\) remains unchanged. Thus,
$b(\pi') = \max_{u,v \in V^2} \frac{1}{d_{uv}^6} |\pi'(u) - \pi'(v)| = b(\pi).$

Therefore, \( \pi' \) also constitutes an optimal solution. However, as \(\sum_{i=1}^{n} ix_u^i(\pi) \geq \left\lceil \frac{n}{2} \right\rceil + 1\), \( \pi' \) verifies 

\begin{equation}
\nonumber
\sum_{i=1}^{n} i x_u^i(\pi') = n + 1 - \sum_{i=1}^{n} x_u^i(\pi) \leq n-\left\lceil \frac{n}{2} \right\rceil \leq \left\lceil \frac{n}{2} \right\rceil,
\end{equation}

contradicting our initial assumption and concluding the proof.
\end{proof}

\section{Performance Evaluation}
\label{section:results}
In this section, we conduct numerical tests to evaluate the efficiency of the algorithmic reinforcements introduced previously, and assess the effect of weighted-BMP optimal solutions on optimizing the computational costs of quantum emulation, using a set of realistic instances.

\subsubsection{Instances description and implementation features}

We generate multiple graphs that represent amorphous solids using~\cite{juliàfarré2024amorphous}. Amorphous solids~\cite{zallen2008physics}, characterized by their lack of a crystalline structure, do not exhibit long-range order but maintain short-range order through specific bond lengths and angles akin to their crystalline counterparts. This feature leads to a well defined coordination number, \(C\), denoting the number of nearest neighbors around each site (node degree). Our generated instances simulate amorphous solids with a coordination number \(C=4\), spanning site counts of 10, 15, 20, 25, 30, 35, and 40. For each site count category, 10 distinct graphs were randomly generated to model the structure of amorphous solids. 

Our performance evaluation is based on three principal metrics: CPU time, the RCM gap to optimality, and the bond dimension. The RCM gap, expressed as a percentage, is defined by $RCM \, gap(\%) = \frac{Obj(RCM) - Opt}{Opt} \times 100,$
where \(Opt\) is the best MILP objective value, and \(Obj(RCM)\) is the RCM objective value. The bond dimension, quantifies the MPS complexity, influencing the computational resources required for quantum emulation. 


The optimization process was executed in the Python programming language. The MILP formulation is solved to optimality using CPLEX 22.1~\cite{IBMCPLEX2023}, while fixing 10 hours as a time limit. On the other hand, the emulation process was executed in the Julia programming language using ITensors package~\cite{ITensor-r0.3}. The entire experiments were conducted on a system with an AMD EPYC 7282 16-Core architecture, in 64-bit modes, with a frequency of 2800 MHz.

To provide a preliminary visual understanding before delving into detailed numerical results, Figure~\ref{fig:bandwidth_reduction} displays how a weighted-BMP solution affects the interaction matrix in a 30-site instance. The figure highlights the repositioning of highest interactions around the matrix main diagonal, which results in a reduced bandwidth.

\begin{figure}[!htb]
    \centering
    \includegraphics[width=1\linewidth]{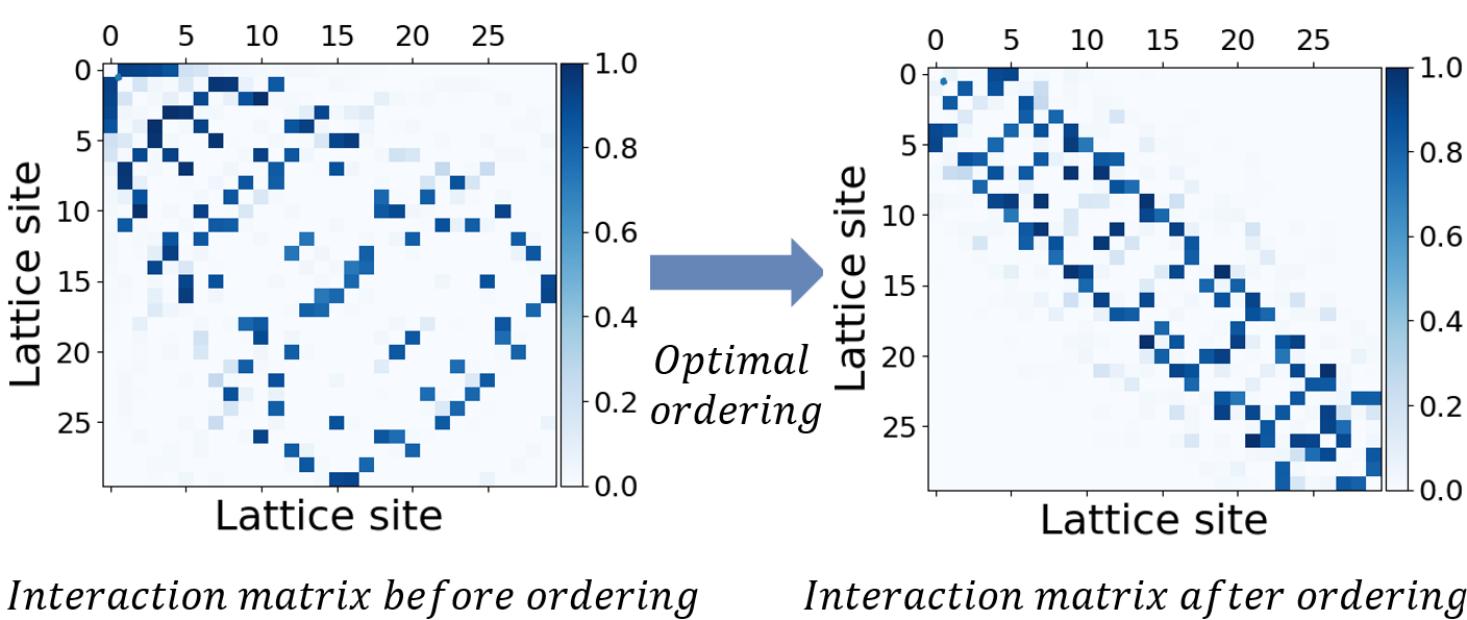}
    \caption{Effect of a solution with 30 sites on interaction matrix bandwidth}
    \label{fig:bandwidth_reduction}
\end{figure}

\subsubsection{Effect of formulation reinforcements}

Figure~\ref{fig:impact_reinforcement} illustrates the impact of formulation reinforcements on average CPU time for solving the MILP. A clear trend is observed: as the number of sites increases, so does the CPU time for both scenarios. The scenario with reinforcements consistently outperforms the one without, as evidenced by the lower average CPU times at all site numbers. The most significant difference occurs at 35 sites, where the time is nearly halved: 10782.55 seconds with reinforcements against 20442.85 seconds without. On average, incorporating reinforcements improves CPU time by 25.61\%.  Yet, at 40 sites, the limitations of reinforcements are revealed, as they fail to confirm optimality within the time limit. This is essentially caused by the weakness of the linear relaxation. Future work will focus on introducing valid inequalities to address this.

\begin{figure}[!htb]
    \centering
    \includegraphics[width=0.9\linewidth,height=0.5\linewidth]
    {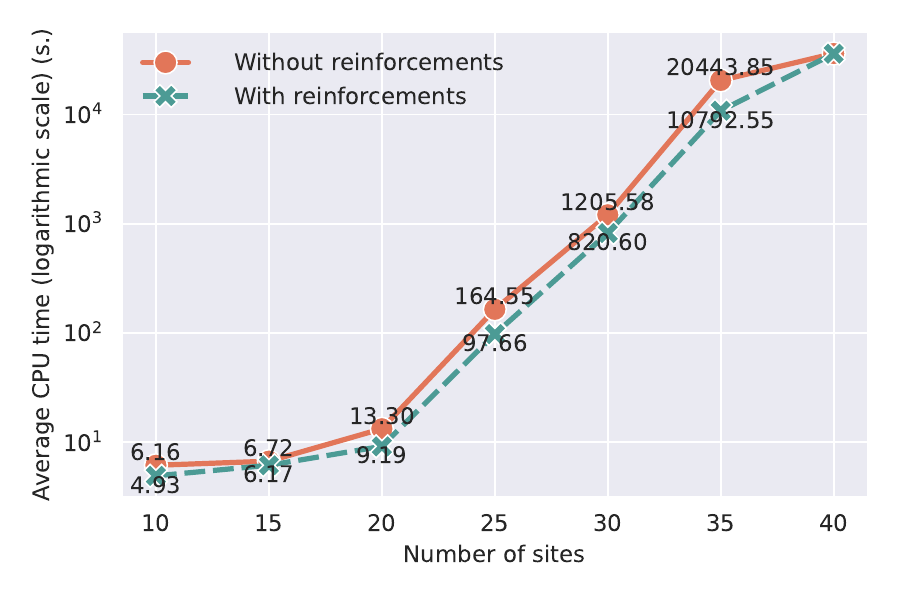}
    \caption{Effect of formulation reinforcements on Average CPU time}
    \label{fig:impact_reinforcement}
\end{figure}

The Figure~\ref{fig:rcm_gap} presents the average RCM gap percentages, for various numbers of sites. The median values of these gaps, around 18\% for 10 sites, 43\% for 30 sites, and approximately 35\% for 40 sites, indicates a consistent fail of the RCM heuristic from achieving optimality. Although some variability is observed in the data, the results identifies an overall average RCM gap to the best integer solution of 33.74\%, which underscores the general trend of optimal solutions that clearly outperform the RCM heuristic. 

\begin{figure}[!htb]
    \centering
    \includegraphics[width=0.9\linewidth,height=0.5\linewidth]{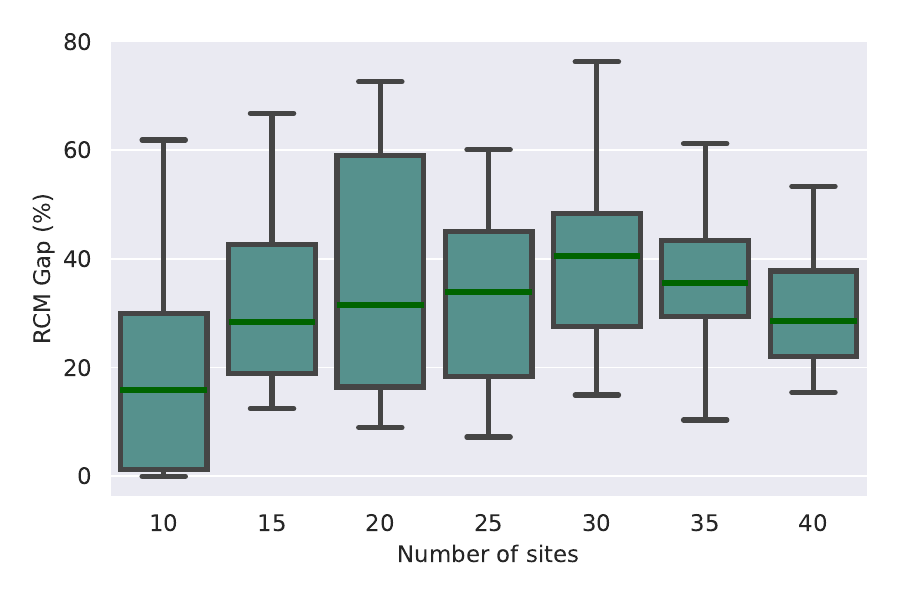}
    \caption{Average RCM gap to best weighted-BMP solution}
    \label{fig:rcm_gap}
\end{figure}

\subsubsection{Impact on quantum emulation}

Figure \ref{fig:TNsimulation} provides a comparative analysis on the average bond dimension as a function of the number of sites, between RCM and Weighted-BMP in MPS simulations. For technical details on our MPS simulations, see Appendix~\ref{appendix}. The orange curve, representing RCM, shows a consistent rise in the average bond dimension with more sites, which indicates the growing complexity and resource demands. Notably, the shaded area around the RCM curve expands with the number of sites, indicating greater variability and unpredictable performance in larger systems. In contrast, the Weighted-BMP method, represented by the green curve, not only achieves a lower average bond dimension across all site counts—with a notable difference at 35 sites where Weighted-BMP bond dimension is approximately 261.60 versus RCM's 417.10—but also displays reduced variability, as seen in the shaded area. This shows a more stable and predictable scaling performance. The overall improvement in average bond dimension offered by Weighted-BMP over RCM, quantified at 24.48\%, underscores its clear advantage in managing the complexities of quantum emulations more efficiently.

\begin{figure}[h]
    \centering
    \includegraphics[width=0.9\linewidth,height=0.5\linewidth]
    {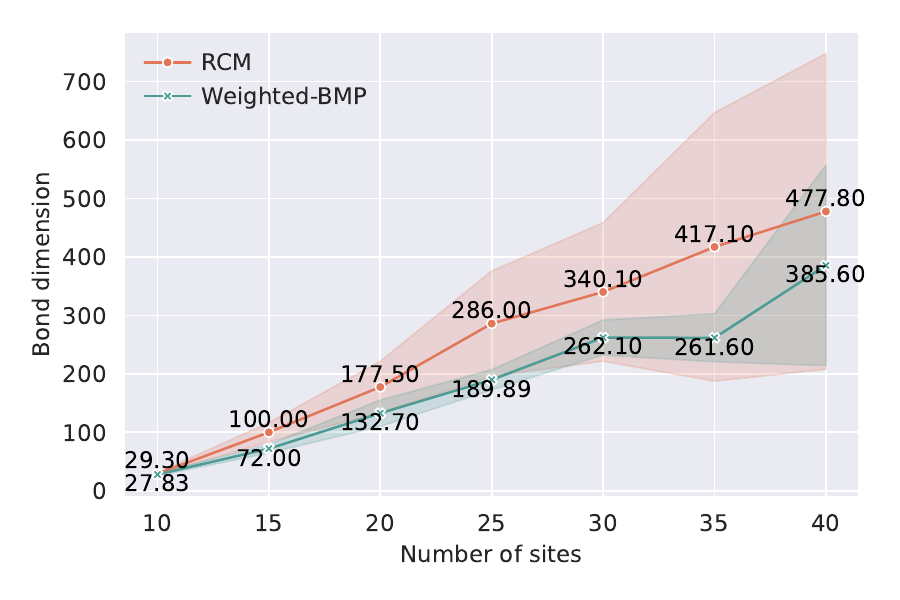} 
    \caption{Average bond dimension and variability in RCM and weighted-BMP methods.}
    \label{fig:TNsimulation}
\end{figure}

\section{Conclusion and Perspectives}
\label{section:conclusion}
In this paper we have studied weighted-BMP, and its application for optimizing the tensors-to-MPS mapping in quantum emulation. We have modeled the problem using a MILP formulation, and have incorporated formulation reinforcements, such as symmetry-breaking inequalities and a theoretical lower bound. Experimental results have shown the benefits of introducing these reinforcements from an algorithmic standpoint. From an application perspective, we have shown the significant role of weighted-BMP in reducing quantum emulation costs for complex quantum systems. An interesting direction in the future is to generalize the weighted-BMP for mapping tensors to more complex structures like PEPS, TTN, and MERA. On the other hand, from an algorithmic viewpoint, this work has laid the foundation for future research, particularly in strengthening the MILP models linear relaxation, using valid inequalities.

\section*{Acknowledgment}
AD and JV acknowledge funding from the European Union under Grant Agreement 101080142 and the project EQUALITY. We also would like to acknowledge Kemal Bidzhiev for fruitful discussion.

\bibliographystyle{IEEEtran}
\bibliography{bibliography}

\begin{thebibliography}{10}
\providecommand{\url}[1]{#1}
\csname url@samestyle\endcsname
\providecommand{\newblock}{\relax}
\providecommand{\bibinfo}[2]{#2}
\providecommand{\BIBentrySTDinterwordspacing}{\spaceskip=0pt\relax}
\providecommand{\BIBentryALTinterwordstretchfactor}{4}
\providecommand{\BIBentryALTinterwordspacing}{\spaceskip=\fontdimen2\font plus
\BIBentryALTinterwordstretchfactor\fontdimen3\font minus \fontdimen4\font\relax}
\providecommand{\BIBforeignlanguage}[2]{{%
\expandafter\ifx\csname l@#1\endcsname\relax
\typeout{** WARNING: IEEEtran.bst: No hyphenation pattern has been}%
\typeout{** loaded for the language `#1'. Using the pattern for}%
\typeout{** the default language instead.}%
\else
\language=\csname l@#1\endcsname
\fi
#2}}
\providecommand{\BIBdecl}{\relax}
\BIBdecl

\bibitem{chinn1982bandwidth}
P.~Z. Chinn, J.~Chv{\'a}talov{\'a}, A.~K. Dewdney, and N.~E. Gibbs, ``The bandwidth problem for graphs and matrices—a survey,'' \emph{Journal of Graph Theory}, vol.~6, no.~3, pp. 223--254, 1982.

\bibitem{thouless_quantum_1972}
D.~J. Thouless, \emph{The {Quantum} {Mechanics} of {Many}-body {Systems}}.\hskip 1em plus 0.5em minus 0.4em\relax Academic Press, 1972.

\bibitem{orus2014practical}
R.~Or{\'u}s, ``A practical introduction to tensor networks: Matrix product states and projected entangled pair states,'' \emph{Annals of physics}, vol. 349, pp. 117--158, 2014.

\bibitem{cataldi2021hilbert}
G.~Cataldi, A.~Abedi, G.~Magnifico, S.~Notarnicola, N.~Dalla~Pozza, V.~Giovannetti, and S.~Montangero, ``Hilbert curve vs hilbert space: exploiting fractal 2d covering to increase tensor network efficiency,'' \emph{Quantum}, vol.~5, p. 556, 2021.

\bibitem{hikihara2023automatic}
T.~Hikihara, H.~Ueda, K.~Okunishi, K.~Harada, and T.~Nishino, ``Automatic structural optimization of tree tensor networks,'' \emph{Physical Review Research}, vol.~5, no.~1, p. 013031, 2023.

\bibitem{legeza2003optimizing}
{\"O}.~Legeza and J.~S{\'o}lyom, ``Optimizing the density-matrix renormalization group method using quantum information entropy,'' \emph{Physical Review B}, vol.~68, no.~19, p. 195116, 2003.

\bibitem{legeza2015advanced}
{\"O}.~Legeza, L.~Veis, A.~Poves, and J.~Dukelsky, ``Advanced density matrix renormalization group method for nuclear structure calculations,'' \emph{Physical Review C}, vol.~92, no.~5, p. 051303, 2015.

\bibitem{brabec2021massively}
J.~Brabec, J.~Brandejs, K.~Kowalski, S.~Xantheas, {\"O}.~Legeza, and L.~Veis, ``Massively parallel quantum chemical density matrix renormalization group method,'' \emph{Journal of Computational Chemistry}, vol.~42, no.~8, pp. 534--544, 2021.

\bibitem{ali2021ordering}
M.~Ali, ``On the ordering of sites in the density matrix renormalization group using quantum mutual information,'' \emph{arXiv preprint arXiv:2103.01111}, 2021.

\bibitem{mate2023compressing}
M.~M{\'a}t{\'e}, K.~Petrov, S.~Szalay, and {\"O}.~Legeza, ``Compressing multireference character of wave functions via fermionic mode optimization,'' \emph{Journal of Mathematical Chemistry}, vol.~61, no.~2, pp. 362--375, 2023.

\bibitem{cuthill1969reducing}
E.~Cuthill and J.~McKee, ``Reducing the bandwidth of sparse symmetric matrices,'' in \emph{Proceedings of the 1969 24th national conference}, 1969, pp. 157--172.

\bibitem{zallen2008physics}
R.~Zallen, \emph{The physics of amorphous solids}.\hskip 1em plus 0.5em minus 0.4em\relax John Wiley \& Sons, 2008.

\bibitem{henriet2020quantum}
L.~Henriet, L.~Beguin, A.~Signoles, T.~Lahaye, A.~Browaeys, G.-O. Reymond, and C.~Jurczak, ``Quantum computing with neutral atoms,'' \emph{Quantum}, vol.~4, p. 327, 2020.

\bibitem{garey1978complexity}
M.~R. Garey, R.~L. Graham, D.~S. Johnson, and D.~E. Knuth, ``Complexity results for bandwidth minimization,'' \emph{SIAM Journal on Applied Mathematics}, vol.~34, no.~3, pp. 477--495, 1978.

\bibitem{gibbs1976algorithm}
N.~E. Gibbs, W.~G. Poole, Jr, and P.~K. Stockmeyer, ``An algorithm for reducing the bandwidth and profile of a sparse matrix,'' \emph{SIAM Journal on Numerical Analysis}, vol.~13, no.~2, pp. 236--250, 1976.

\bibitem{dueck1995heuristic}
G.~Dueck and J.~Jeffs, ``A heuristic bandwidth reduction algorithm,'' \emph{Journal of Combinatorial Mathematics and Combinatorial Computing}, vol.~18, pp. 97--108, 1995.

\bibitem{esposito1998new}
A.~Esposito, M.~F. Catalano, F.~Malucelli, and L.~Tarricone, ``A new matrix bandwidth reduction algorithm,'' \emph{Operations Research Letters}, vol.~23, no. 3-5, pp. 99--107, 1998.

\bibitem{del2001heuristic}
G.~M. Del~Corso and F.~Romani, ``Heuristic spectral techniques for the reduction of bandwidth and work-bound of sparse matrices,'' \emph{Numerical Algorithms}, vol.~28, pp. 117--136, 2001.

\bibitem{czibula2013soft}
G.~Czibula, G.-C. Cri{\c{s}}an, C.-M. Pintea, and I.-G. Czibula, ``Soft computing approaches on the bandwidth problem,'' \emph{Informatica}, vol.~24, no.~2, pp. 169--180, 2013.

\bibitem{marti2001reducing}
R.~Mart{\i}, M.~Laguna, F.~Glover, and V.~Campos, ``Reducing the bandwidth of a sparse matrix with tabu search,'' \emph{European Journal of Operational Research}, vol. 135, no.~2, pp. 450--459, 2001.

\bibitem{chvatal1970remark}
V.~Chv{\'a}tal, ``A remark on a problem of harary,'' \emph{Czechoslovak Mathematical Journal}, vol.~20, no.~1, pp. 109--111, 1970.

\bibitem{del1999finding}
G.~M. Del~Corso and G.~Manzini, ``Finding exact solutions to the bandwidth minimization problem,'' \emph{Computing}, vol.~62, no.~3, pp. 189--203, 1999.

\bibitem{caprara2005laying}
A.~Caprara and J.-J. Salazar-Gonz{\'a}lez, ``Laying out sparse graphs with provably minimum bandwidwidth,'' \emph{INFORMS Journal on Computing}, vol.~17, no.~3, pp. 356--373, 2005.

\bibitem{marti2008branch}
R.~Mart{\'\i}, V.~Campos, and E.~Pi{\~n}ana, ``A branch and bound algorithm for the matrix bandwidwidth minimization,'' \emph{European Journal of Operational Research}, vol. 186, no.~2, pp. 513--528, 2008.

\bibitem{coudert2016note}
D.~Coudert, ``A note on integer linear programming formulations for linear ordering problems on graphs,'' Ph.D. dissertation, Inria; I3S; Universite Nice Sophia Antipolis; CNRS, 2016.

\bibitem{lim2007fast}
A.~Lim, B.~Rodrigues, and F.~Xiao, ``A fast algorithm for bandwidth minimization,'' \emph{International Journal on Artificial Intelligence Tools}, vol.~16, no.~03, pp. 537--544, 2007.

\bibitem{esposito1999sparse}
A.~Esposito, F.~Catalano, F.~Malucelli, L.~Tarricone \emph{et~al.}, ``Sparse matrix bandwidth reduction: Algorithms, applications and real industrial cases in electromagnetics,'' in \emph{High performance algorithms for structured matrix problems}.\hskip 1em plus 0.5em minus 0.4em\relax Nova Science Publ., 1999, pp. 27--46.

\bibitem{berry1996sparse}
M.~W. Berry, B.~Hendrickson, and P.~Raghavan, ``Sparse matrix reordering schemes for browsing hypertext,'' \emph{Lectures in applied mathematics-American Mathematical Society}, vol.~32, pp. 99--124, 1996.

\bibitem{zhou2020quantum}
L.~Zhou, S.-T. Wang, S.~Choi, H.~Pichler, and M.~D. Lukin, ``Quantum approximate optimization algorithm: Performance, mechanism, and implementation on near-term devices,'' \emph{Physical Review X}, vol.~10, no.~2, p. 021067, 2020.

\bibitem{coleman_introduction_2015}
P.~Coleman, \emph{Introduction to many-body physics}.\hskip 1em plus 0.5em minus 0.4em\relax Cambridge University Press, 2015.

\bibitem{rieffel2000introduction}
E.~Rieffel and W.~Polak, ``An introduction to quantum computing for non-physicists,'' \emph{ACM Computing Surveys (CSUR)}, vol.~32, no.~3, pp. 300--335, 2000.

\bibitem{juliàfarré2024amorphous}
S.~Julià-Farré, J.~Vovrosh, and A.~Dauphin, ``Amorphous quantum magnets in a two-dimensional rydberg atom array,'' 2024.

\bibitem{IBMCPLEX2023}
\BIBentryALTinterwordspacing
IBM, ``Cplex optimizer,'' 2023. [Online]. Available: \url{https://www.ibm.com/fr-fr/analytics/cplex-optimizer}
\BIBentrySTDinterwordspacing

\bibitem{ITensor-r0.3}
\BIBentryALTinterwordspacing
M.~Fishman, S.~R. White, and E.~M. Stoudenmire, ``{Codebase release 0.3 for ITensor},'' \emph{SciPost Phys. Codebases}, pp. 4--r0.3, 2022. [Online]. Available: \url{https://scipost.org/10.21468/SciPostPhysCodeb.4-r0.3}
\BIBentrySTDinterwordspacing

\end{thebibliography}

\section{Appendix}
\subsection{Details of MPS simulations}
\label{appendix}
We perform MPS simulations on the antiferromagnetic transverse field Ising model, applicable to Rydberg atom arrays as outlined in~\cite{henriet2020quantum}. Specifically, we analyze $N$ interacting spin $S=\frac{1}{2}$ sites within a two-dimensional amorphous solid, governed by the Hamiltonian:
\begin{equation}\label{eq:ising_hamiltonian}
H=\sum_{i<j}\frac{1}{|\mathbf{r}_i-\mathbf{r}_j|^6}\sigma^z_i\sigma^z_j+\sum_i \sigma^x_i,
\end{equation}
where the operators $\sigma^\alpha_i$ denote the Pauli matrices at the $i$-th site, positioned at $\mathbf{r}_i$. We focus on simulating the non-equilibrium dynamics starting from an initial ferromagnetic state $\ket{000\cdots0}$, observing the system up to a time $tJ=1$ with time increments of $dt = 0.01/J$. We maintain a truncation error of $10e^{-12}$ throughout the simulations to ensure energy conservation within an accuracy of $10e^{-5}$.
\end{document}